\DeclareMathOperator \re {Re}
\DeclareMathOperator \im {Im}
\newtheorem{thm}{Theorem}
\newtheorem{lem}{Lemma}
\title{Semiclassical resonance asymptotics for the  delta potential on the half line}
\author{Kiril Datchev}
\address{Department of Mathematics, Purdue University, West Lafayette,
  IN 47907-2067}
\email{kdatchev@purdue.edu}
\author{Nkhalo Malawo}
\address{Department of Mathematics, Purdue University, West Lafayette,
  IN 47907-2067}
\email{nkhalom@protonmail.com}
\thanks{K. Datchev was supported in part by NSF grant DMS-1708511. N. Malawo was supported in part by an REU Grant from the Purdue Math Department Tong Endowment. The authors are grateful to Jeffrey Galkowski and Maciej Zworski for helpful discussions, and also to the anonymous referees for their comments and corrections.}
\begin{document}

\begin{abstract}
We compute resonance width asymptotics for the delta potential on the half-line, by deriving a formula for  resonances in terms of the Lambert W function and applying a series expansion. This potential is a simple model of a thin barrier, motivated by physical problems such as quantum corrals and leaky quantum graphs.
\end{abstract}

\maketitle

\section{Introduction}
The analysis of scattering by \textit{thin barriers} is important for many physical problems, including quantum corrals \cite{bzh} and leaky quantum graphs \cite{ex}. In \cite{gs15, g15, g16,g19book,g19} Galkowski--Smith and Galkowski study the distribution of resonances for operators of the form
\[
-h^2 \Delta + h^{2} h^{-\alpha}  \delta_{\partial \Omega},
\]
where $h>0$ is a semiclassical parameter, $\Delta$ is the Laplacian on $\mathbb R^n$, and $\delta_{\partial \Omega}$ is a delta function on the boundary of an open bounded set $\Omega$ with smooth boundary $\partial \Omega$. The factor of $h^{-\alpha}$ models a barrier whose interaction with waves depends on frequency (having a nontrivial dependence is typical in physical systems: see \cite{bzh}), with the positive parameter $\alpha$ determining how quickly the strength of the barrier grows with frequency.

In this paper we consider the corresponding operator on the half line $(0,\infty)$, namely
\begin{equation}\label{e:delta1}
-h^2 \partial_x^2 + h^{2} h^{-\alpha} \delta_1,
\end{equation}
where $\delta_1$ is the Dirac delta function centered at $x=1$,
with Dirichlet boundary condition at $x=0$. In this setting we can analyze scattering more simply, fully, and precisely than is possible in the more complicated higher-dimensional case.

A number $z \in \mathbb C \setminus \{0\}$ is a \textit{resonance} of \eqref{e:delta1} if and only if there is a continuous function $u$ such that
\begin{equation}\label{e:ressyst} \begin{split}
&u(x) = \sin(zx/h) \quad \text{ for } \quad 0<x<1, \\
&u(x) = C e^{izx/h} \quad\quad  \text{ for } \quad x> 1, \\
&u'(1-) - u'(1+) + h^{-\alpha} u(1) = 0,
\end{split}\end{equation}
for some constant $C$, where $u'(1-)$ and $u'(1+)$ are respectively the derivatives from the left and from the right of $u$ at $x=1$.
A solution to \eqref{e:ressyst} exists if and only if
\[
\sin(z/h) = C e^{iz/h}, \qquad (z/h)\cos(z/h) - (iz/h)Ce^{iz/h} + h^{-\alpha} C e^{iz/h} = 0,
\]
or, equivalently,
\begin{equation}\label{e:reseq}
 h^{-\alpha} e^{2iz/h} - h^{-\alpha}  + 2iz/h = 0.
\end{equation}

The values of $\im z$ for solutions to \eqref{e:reseq} are called the resonance widths, and they give the rates of decay of waves. Our main results establish semiclassical asymptotics for all resonance widths, apart from resonances near zero and near infinity. The behavior is logarithmic when $\alpha<1$ and polynomial when $\alpha>1$.

\begin{figure}[h]
    \centering
    \includegraphics[width=12cm]{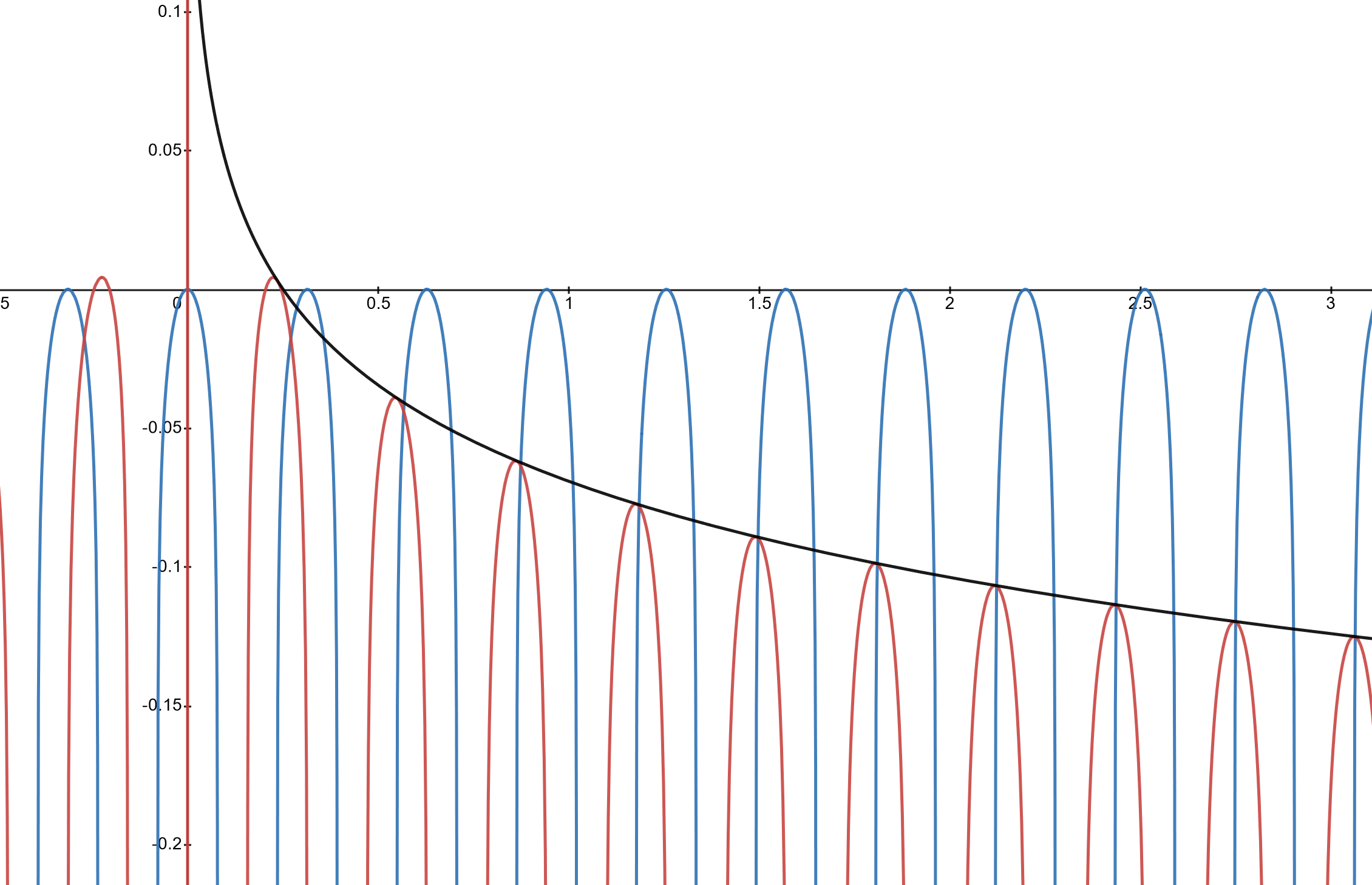}
    \caption{This figure illustrates Theorem \ref{t:asmall} with $h = 0.1$ and $\alpha = 0.7$. The black curve is the leading approximation $-\im z = \frac h 2 \ln|2h^{\alpha-1}\re z|$ and the blue and red curves are the real and imaginary parts of \eqref{e:reseq}. The resonances occur where the blue and red curves intersect. }
    \label{fig:my_label}
\end{figure}

\begin{thm}\label{t:asmall}
Let $\alpha \in (0,1)$ and $ \varepsilon \in (0,1)$ be given. Then there is $h_0>0$ such that, when $h \in (0,h_0]$, all solutions to \eqref{e:reseq} satisfying
\begin{equation}\label{e:zeps}
\varepsilon \le |z| \le 1/\varepsilon,
\end{equation}
obey
\begin{equation}\label{e:tasmall}
0 \le -\im z - \frac h 2 \ln|2h^{\alpha-1}\re z| \le \frac 5 4   h^{3-2\alpha}\varepsilon^{-2}
\end{equation}
\end{thm}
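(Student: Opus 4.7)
The plan is to reduce \eqref{e:reseq} to a single real identity by separating real and imaginary parts and then squaring and adding, and to read both bounds in \eqref{e:tasmall} off this identity. Writing $z = x + iy$ with $x, y \in \mathbb{R}$ and multiplying \eqref{e:reseq} by $h^\alpha$ gives $e^{2iz/h} = 1 - 2izh^{\alpha-1}$, so taking real and imaginary parts yields
\[
e^{-2y/h}\cos(2x/h) = 1 + 2yh^{\alpha-1}, \qquad e^{-2y/h}\sin(2x/h) = -2xh^{\alpha-1}.
\]
Squaring and adding produces the key identity
\[
e^{-4y/h} = 4x^2 h^{2(\alpha-1)} + (1 + 2yh^{\alpha-1})^2.
\]
Setting $L := \tfrac{h}{2}\ln|2xh^{\alpha-1}|$ so that $4x^2 h^{2(\alpha-1)} = e^{4L/h}$, the lower bound $-y \ge L$ in \eqref{e:tasmall} is immediate from the nonnegativity of the second summand.

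For the upper bound, I will first establish an a priori estimate on $|y|$. The hypothesis $|z| \le 1/\varepsilon$ bounds the right-hand side of the key identity by $C h^{2(\alpha-1)}$ with $C = C(\alpha, \varepsilon)$, and taking logarithms gives $|y| \le C' h\ln(1/h)$. Consequently, for $h$ small, the constraint $|z| \ge \varepsilon$ forces $|x| \ge \varepsilon/2$, while $|2yh^{\alpha-1}| = O(h^\alpha \ln(1/h)) \to 0$, making $(1 + 2yh^{\alpha-1})^2$ arbitrarily close to $1$.

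To extract the upper bound, I will divide the key identity by $e^{4L/h}$ and take logarithms, arriving at
\[
-\tfrac{4(y+L)}{h} = \ln\!\left(1 + \frac{h^{2(1-\alpha)}(1+2yh^{\alpha-1})^2}{4x^2}\right),
\]
and then apply the inequality $\ln(1+u) \le u$ to obtain
\[
-(y+L) \le \frac{h^{3-2\alpha}(1+2yh^{\alpha-1})^2}{16 x^2}.
\]
Choosing $h_0$ small enough so that $|x|^2 \ge \varepsilon^2/4$ and $(1+2yh^{\alpha-1})^2 \le 5$ hold simultaneously then yields the claimed bound $\tfrac{5}{4}h^{3-2\alpha}\varepsilon^{-2}$.

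The main delicacy is the a priori estimate on $|y|$ together with the explicit choice of $h_0$ needed to hit the constant $5/4$; the remainder is a routine real-variable computation. The Lambert $W$ function advertised in the abstract is not strictly necessary for this theorem: one could instead introduce $w = 1 - 2izh^{\alpha-1}$ to rewrite \eqref{e:reseq} as $wh^{-\alpha}e^{wh^{-\alpha}} = h^{-\alpha}e^{h^{-\alpha}}$ and expand the relevant branches of $W$, but the squaring-and-adding shortcut above appears to suffice.
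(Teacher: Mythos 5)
Your proposal is correct and is a genuine simplification of the argument in the paper. The squaring-and-adding identity
\[
e^{-4y/h} = 4x^2 h^{2(\alpha-1)} + (1 + 2yh^{\alpha-1})^2
\]
and the subsequent use of $\ln(1+u)\le u$ are in fact exactly what the paper does in the second half of its proof (compare the passage leading to \eqref{e:ln0t}, where the paper's quantity $t$ is identical to your $u = h^{2(1-\alpha)}(1+2yh^{\alpha-1})^2/(4x^2)$). The difference is in how the a priori bounds $|x|\gtrsim\varepsilon$ and $|y|=O(h\ln(1/h))$ are obtained. The paper derives them from the Lambert $W$ expansion \eqref{e:zk}: Lemma~\ref{l:keps} pins down $|k|$, then the explicit formulas \eqref{e:rezkimzk} together with $|R_k'|\le 4$ give $|\re z_k|\ge\varepsilon/3$ and $|\im z_k|\le h\ln(4\varepsilon^{-1}h^{\alpha-1})$. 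You instead read both bounds off the key identity itself, using only $\varepsilon\le|z|\le1/\varepsilon$. Your route is more elementary and self-contained, and you are right that the Lambert $W$ machinery is dispensable for this particular theorem; the paper keeps it because it organizes the resonances into strings indexed by $k$, serves Theorem~\ref{t:abig} as well, and supplies the full convergent expansion advertised in the introduction.

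One small point worth making explicit: the step from ``$e^{-4y/h}\le Ch^{2(\alpha-1)}$'' to ``$|y|\le C'h\ln(1/h)$'' gives only an upper bound on $-y$. You need $y\le 0$ to turn that into a bound on $|y|$. This is immediate from the key identity (if $y>0$ then the left side is $<1$ while the right side is $\ge(1+2yh^{\alpha-1})^2>1$), but it should be stated, since the subsequent deductions — $|x|\ge\varepsilon/2$ and $(1+2yh^{\alpha-1})^2\le5$ — both rely on the two-sided bound $|y|=O(h\ln(1/h))$.
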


\begin{thm}\label{t:abig}
Let $\alpha >1$ and  $ \varepsilon \in (0,1)$ be given. Then there is $h_0>0$ such that, when $h \in (0,h_0]$, all solutions to \eqref{e:reseq} satisfying \eqref{e:zeps}
obey
\begin{equation}\label{e:tabig}
|\im z +   (\re z)^2 h^{2\alpha-1}| \le 7 h^{2\alpha+1} \ln^2(h^{-\alpha}) +  34 \varepsilon^{-4}h^{4\alpha - 3}.
\end{equation}
\end{thm}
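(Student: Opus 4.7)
The starting point is to rewrite \eqref{e:reseq} as
\[
e^{2iz/h} = 1 - 2izh^{\alpha-1}.
\]
Since $\alpha > 1$ and $|z|$ is bounded, the right-hand side is close to $1$, so $2iz/h$ must lie near $2\pi i k$ for some integer $k$; equivalently, $z$ is close to $\pi k h$. The constraint $\varepsilon \le |z| \le 1/\varepsilon$ selects integers $k$ with $|k|$ of order $1/h$. Taking a logarithm yields the fixed-point equation
\[
z \;=\; \pi k h - \frac{ih}{2}\,\log\!\bigl(1 - 2izh^{\alpha-1}\bigr).
\]

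The first step would be a contraction-mapping argument on a disk of radius $O(h^{\alpha})$ about $\pi k h$: the map on the right-hand side has derivative of size $h^{\alpha}$ there, so has a unique fixed point, and conversely every resonance in $\varepsilon \le |z|\le 1/\varepsilon$ is captured this way for a unique $k$. This pins down the correspondence $k \leftrightarrow z$.

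Next, I would expand the logarithm via its Taylor series
\[
\log(1+\zeta) = \zeta - \tfrac{\zeta^2}{2} + \tfrac{\zeta^3}{3} - \tfrac{\zeta^4}{4} + \cdots, \qquad \zeta = -2izh^{\alpha-1},
\]
which gives
\[
z(1+h^{\alpha}) \;=\; \pi k h - iz^2 h^{2\alpha-1} + \tfrac{4}{3}z^3 h^{3\alpha-2} + 2iz^4 h^{4\alpha-3} + \text{remainder}.
\]
Writing $z = x+iy$, taking imaginary parts, and using the bootstrap estimate $y = O(h^{2\alpha-1})$ to kill cross-terms, one finds
\[
y(1+h^{\alpha}) + x^2 h^{2\alpha-1} \;=\; 2x^4 h^{4\alpha-3} + \text{small}.
\]
Solving for $y$ and using $|x|\le 1/\varepsilon$ produces the $34\varepsilon^{-4}h^{4\alpha-3}$ piece of the error directly from the $\zeta^4$ term; the intermediate $O(x^2 h^{3\alpha-1})$ correction from the $(1+h^{\alpha})^{-1}$ expansion is easily absorbed into the other terms (for $\alpha<2$ it is smaller than $\varepsilon^{-4}h^{4\alpha-3}$, and for $\alpha>2$ it is smaller than $h^{2\alpha+1}\ln^2(h^{-\alpha})$ once $h$ is small).

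The hard part is explaining and bounding the $h^{2\alpha+1}\ln^2(h^{-\alpha})$ term with the explicit constant $7$. I expect this to be cleanest through the Lambert W representation promised in the abstract: setting $T = (1-2izh^{\alpha-1})/h^{\alpha}$, the resonance equation rearranges to $Te^{T} = e^{1/h^{\alpha}}/h^{\alpha}$, so $T = W_{-k}(e^{1/h^{\alpha}}/h^{\alpha})$ for the appropriate branch. The standard asymptotic $W_j(x) = L_1 - L_2 + L_2/L_1 + O(L_2^2/L_1^2)$ with $L_1=\ln x+2\pi ij$ and $L_2=\ln L_1$, applied with $\ln x = 1/h^{\alpha}+\alpha \ln(1/h)$, gives after substitution and real/imaginary decomposition precisely a remainder of size $\tfrac{1}{4}\alpha^{2}h^{2\alpha+1}\ln^{2}(1/h) = \tfrac{1}{4}h^{2\alpha+1}\ln^{2}(h^{-\alpha})$, which is the source of the logarithmic factor. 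Combining the two mechanisms and keeping honest constants through each Taylor remainder yields the stated bound; the main bookkeeping challenge is to pair the exact inverse-factorial coefficients with the crude $|x|\le 1/\varepsilon$ estimates in a way that produces the concrete numbers $7$ and $34$.
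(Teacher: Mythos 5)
Your first route --- rewriting \eqref{e:reseq} as $e^{2iz/h}=1-2izh^{\alpha-1}$, taking logarithms to obtain the fixed-point equation $z=\pi kh-\frac{ih}{2}\log(1-2izh^{\alpha-1})$, and Taylor-expanding $\log(1+\zeta)$ --- is genuinely different from the paper's proof, which solves \eqref{e:reseq} exactly via the Lambert $W$ function and then applies the Corless--Gonnet--Hare--Jeffrey--Knuth series \eqref{e:wk}--\eqref{e:rk} with the tail estimate \eqref{e:tailest}. Your route is actually cleaner for this theorem, because it never introduces $\ln y=h^{-\alpha}+\ln(h^{-\alpha})$ into the inner logarithm; that extraneous $\ln(h^{-\alpha})$ is the sole source of the $h^{2\alpha+1}\ln^2(h^{-\alpha})$ contribution in the paper's computation (via \eqref{e:lbd}, \eqref{e:rk1}, \eqref{e:rk2}). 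You therefore misdiagnose the ``hard part'': the logarithmic term is not an intrinsic asymptotic that forces you back to Lambert $W$; it is an upper bound produced by the paper's parametrization. Carrying your expansion out, with $z=x+iy$ and $y=O(h^{2\alpha-1})$, one gets
\[
y(1+h^\alpha)+x^2h^{2\alpha-1}=y^2h^{2\alpha-1}+4x^2yh^{3\alpha-2}+2x^4h^{4\alpha-3}+\cdots,
\]
hence $|\,y+x^2h^{2\alpha-1}\,|\lesssim x^2h^{3\alpha-1}+x^4h^{4\alpha-3}$ with no logarithm whatsoever. This implies \eqref{e:tabig}: for $1<\alpha<2$ one has $3\alpha-1>4\alpha-3$, so the first piece is absorbed into $\varepsilon^{-4}h^{4\alpha-3}$; for $\alpha\ge 2$ one has $3\alpha-1\ge 2\alpha+1$, so it is absorbed into $h^{2\alpha+1}\ln^2(h^{-\alpha})$. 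Your proposed pivot to the $W$-series at the end is thus unnecessary, and in fact just reproduces the paper's method with its looser remainder.

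What is actually missing from your proposal is not an idea but completeness. First, the contraction-mapping claim needs a companion argument (Rouch\'e, or the analogue of Lemma \ref{l:keps}) establishing both that $|\Phi'(z)|=h^\alpha/|1-2izh^{\alpha-1}|\le Ch^\alpha$ on $\varepsilon\le|z|\le 1/\varepsilon$ and that every resonance in this annulus is captured by exactly one integer $k$; the paper sidesteps this by noting that the branches $W_k$ exhaustively parametrize the solutions of $xe^x=y$. Second, the explicit constants $7$ and $34$ are nowhere tracked in your sketch, while the paper's chain of inequalities from \eqref{e:imk1} onward is what produces them. Neither is a conceptual obstruction, and with that bookkeeping your direct expansion would give a self-contained (and slightly sharper) proof of the theorem.
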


\begin{figure}[h]
    \centering
    \includegraphics[width=12cm]{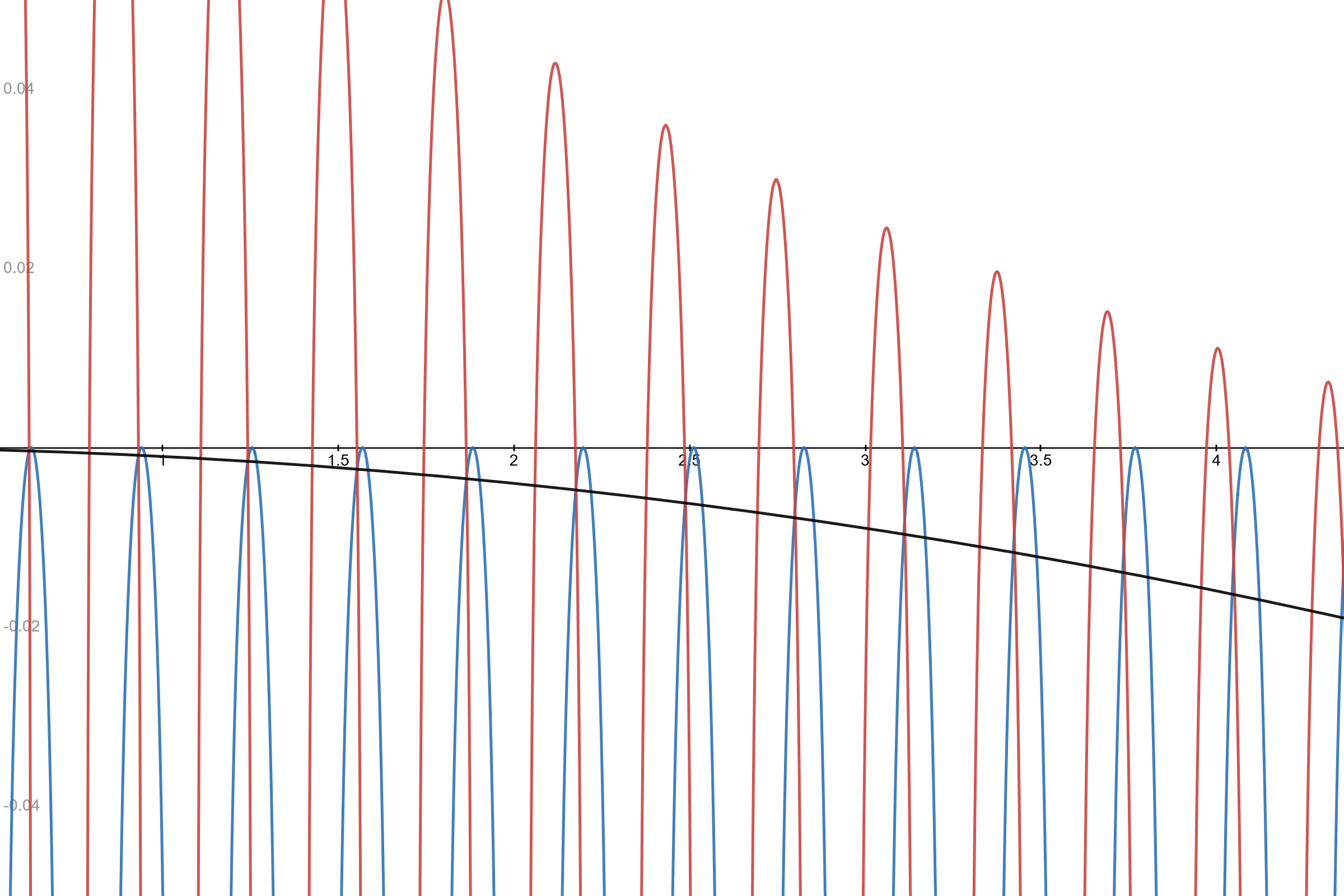}
    \caption{This figure illustrates Theorem \ref{t:abig} with $h = 0.1$ and $\alpha = 2$. The black curve is the leading approximation $-\im z = (\re z)^2 h^{2\alpha-1}$ and the blue and red curves are the real and imaginary parts of \eqref{e:reseq}. The resonances occur where the blue and red curves intersect. }
    \label{fig:my_label2}
\end{figure}

We have not attempted to optimize the numerical constants; the main point is the dependence on $h$ and $\alpha$. 
The asymptotics \eqref{e:tasmall} and \eqref{e:tabig} refine lower bounds obtained for much more general problems by Galkowski: see in particular  \cite[Section 1.B]{g19}. In our simpler situation, we obtain a formula for the resonances in terms of the Lambert W function in equation \eqref{e:zk} below. Applying the known series for this function (see equations \eqref{e:wk} and \eqref{e:rk} below) leads to a full convergent series expansion for the resonances; thus our methods could be elaborated to give any number of further terms in the expansions  \eqref{e:tasmall} and \eqref{e:tabig}, giving the resonance widths to any desired accuracy.

To compare our results with the higher-dimensional setting  studied in \cite{g16, g19}, let us focus on the case $ \re z \sim 1$. Then our results give widths of order $\frac h 2 \ln(2h^{\alpha-1})$ when $\alpha<1$ and $h^{2\alpha -1}$ when $\alpha > 1$: stronger point interaction, i.e. a larger factor in front of $\delta_1$ in \eqref{e:delta1}, leads to slower decay. By contrast, Galkowski's results for a thin barrier supported on a circle in two-dimensions give widths of order $\frac h 2 \ln(2h^{\alpha-1})$ when $\alpha < 5/6$ and $h^{2\alpha - 2/3}$ when $5/6 \le \alpha \le 1$; see Theorems 1 and 2 of \cite{g16}. The key difference is that the shift between logarithmic and polynomial behavior occurs at $\alpha=1$ in the one-dimensional setting, and at $\alpha = 5/6$ in the higher-dimensional setting. As explained in \cite{g16, g19}, resonances polynomially close to the real axis when $\alpha >5/6$ arise from singularities propagating along glancing rays. 

In our one-dimensional setting, there are no such glancing rays, and resonances polynomially close to the real axis are caused by reflection. The reason for the change of behavior between $\alpha<1$ and $\alpha>1$ is most easily seen by computing the reflection coefficient of the model operator $-h^2\partial_x^2 + h^2 h^{-\alpha} \delta_0$, on the real line, at energy $z^2$, which is 
\[
R =  \frac{h^{2-2\alpha}}{4z^2 + h^{2-2\alpha}};
\]
see e.g. Section 2.5 of \cite{grif}. As $h \to 0$, we have $R \to 0$ if $\alpha <1$, and $R \to 1$ if $\alpha >1$. Thus there is a small amount of reflection when $\alpha<1$ and a large amount when $\alpha>1$.

The operator we are studying can be related to an operator of the form $-\partial_y^2 - \delta_a$ by rescaling. More specifically, put $y = h^{-\alpha}x$. Then
\[
(-h^2 \delta_x^2 + h^{2-2\alpha} \delta_1 - z^2) = h^{2-2\alpha}(-\partial_y^2 - \delta_a - z_y^2), \qquad 
\text{where } \quad a = h^{-\alpha}, \quad z_y = h^{\alpha-1}z.
\]
This scaling also highlights the role of the transitional value $\alpha=1$.

The asymptotics in the case $\alpha =1$ are more complicated. Figure \ref{fig:a1} suggests that the approximation of Theorem 1 is more accurate at high energies and the approximation of Theorem 2 is more accurate at low energies. In both cases the resonance widths have size $h$ but the $z$ dependence is not the same. To get a precise asymptotic for $\alpha=1$ would require a finer analysis than is done below. Specifically, it would be necessary to distinguish high and low energies, as the leading behavior appears to be algebraic in $\re z$  when $\re z$ is small and logarithmic when $\re z$ is large.

\begin{figure}[h]
    \centering
    \includegraphics[width=12cm]{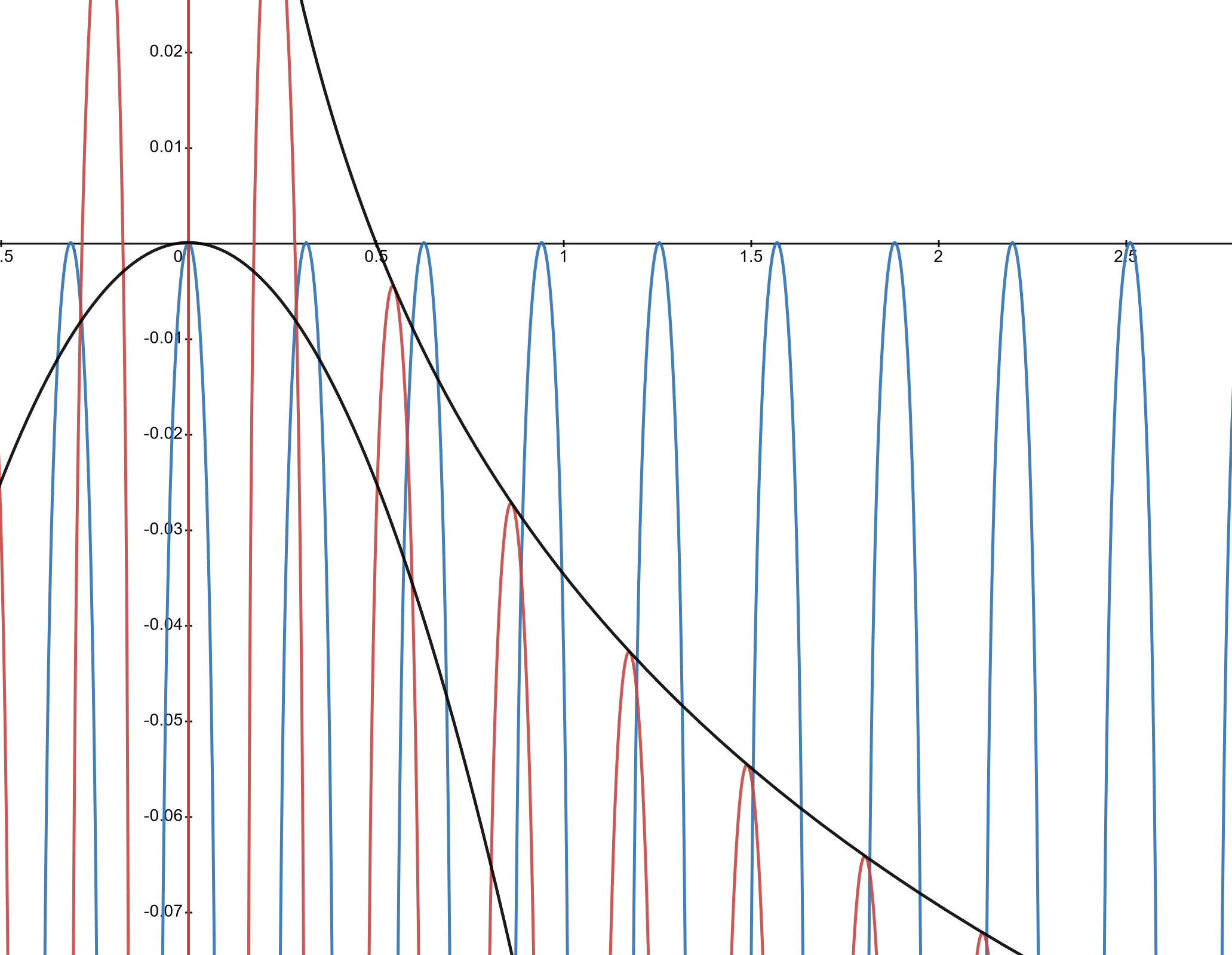}
    \caption{This figure illustrates the case $\alpha=1$ with $h = 0.1$. The lower black curve is the  approximation $-\im z = (\re z)^2 h^{2\alpha-1}$ and the upper black curve is the  approximation $-\im z = \frac h 2 \ln|2h^{\alpha-1}\re z|$. The blue and red curves are the real and imaginary parts of \eqref{e:reseq}. The resonances occur where the blue and red curves intersect. Note that the lower black curve more closely approximates the resonances at lower energies. As the energy increases, the upper black curve is a better approximation. }
    \label{fig:a1}
\end{figure}

In \cite[Theorem 2.6]{g15}, Galkowski studies a variant of our problem with $h^{-\alpha}\delta$ replaced by $h^{-\alpha}\delta'$, the derivative of the Dirac delta function. In that case the decay rates are of order $\frac{1+\alpha}2 h \ln(1/h)$ when $\alpha < -1$ and of order $h^{3+2\alpha}$ when $\alpha >-1$. Other examples of strings of resonances on such logarithmic and polynomial curves can be found in \cite{r,z,b,dkk,bfrz,g17,hw}. The large argument expansion of the Lambert W function has been previously used to compute resonance asymptotics in \cite{dkk,d,c}, and it has been used for Schr\"odinger equations with Dirac delta functions in \cite{ks,hm,s}. The closest of these papers to ours are Herbst and Mavi's \cite{hm} and Sacchetti's \cite{s}; the main difference is that the results in \cite{hm,s} correspond to the asymptotic regime $z \to 0$ where the difference between $\alpha<1$ and $\alpha>1$ is not visible.  See the discussion following equation \eqref{e:zk} for more. See \cite[Chapter II.2]{aghkh} and references therein for more general results concerning the resonances of finitely many $\delta$- and $\delta'$-interactions on the real line, and see \cite{dz} for a broader introduction to resonances.

\section{The Lambert W function}

We begin by reviewing some needed facts about the Lambert W function from \cite{cghjk}. This is the function which solves the equation
\begin{equation}\label{e:xexy}
x e^x = y.
\end{equation}
For any $y$, \eqref{e:xexy} has a countable infinity of complex solutions, $x = W_k(y)$, where  $k$ varies over $\mathbb Z$. We will be studying solutions to \eqref{e:xexy} with $y \gg 1$. Then $W_k(y)$ can be expanded in a convergent series, with coefficients given  in terms of the Stirling numbers $[{p \atop q}]$, where $[{p \atop q}]$ is the number of ways to arrange $p$ objects into $q$ cycles (see \cite[Chapter 6]{gkp}). More specifically, by equations (4.18), (4.19), and (4.20) of \cite{cghjk}, we have
\begin{equation}\label{e:wk}
W_k(y)=\ln(y)+2\pi ik-\ln(\ln(y)+2\pi ik)+R_k,
\end{equation}
where
\begin{equation}\label{e:rk}
R_k= \sum_{j=0}^\infty \sum_{m=1}^\infty c_{j,m} \frac{\ln^{m}(\ln(y)+2 \pi ik)}{(\ln(y)+2 \pi ik)^{j+m}} = \frac{\ln(\ln(y)+2\pi ik)}{\ln(y)+2\pi ik} + \cdots,
\end{equation}
and
\[
c_{j,m} = \frac 1 {m!} (-1)^j \left[{j+m \atop j+1} \right].
\]
Here and below the branch of $\ln$ is taken so that $-\pi  < \im \ln z \le \pi$ for all $z \in \mathbb C \setminus \{0\}$.

\begin{lem}
The series \eqref{e:rk} is absolutely convergent for $y$ large enough and $k \in \mathbb Z$. More precisely we have the tail estimate
\begin{equation}\label{e:tailest}
\begin{split}
\left| R_k - \frac{\ln(\ln(y)+2\pi ik)}{\ln(y)+2\pi ik}\right| &\le \sum_{j \ge 0, \ m \ge 1, \ (j,m) \ne (0,1)} \left| c_{j,m} \frac{\ln^{m}(\ln(y)+2 \pi ik)}{(\ln(y)+2 \pi ik)^{j+m}}\right|\\
&\le 2 \left|\frac{\ln(\ln(y)+2 \pi ik)}{\ln(y)+2 \pi ik}\right|^2,
\end{split}
\end{equation}
and
\begin{equation}\label{e:lnlnln}
\left|\frac{\ln(\ln(y)+2 \pi ik)}{\ln(y)+2 \pi ik}\right| \le \frac 12,
\end{equation}
for $y$ large enough.
\end{lem}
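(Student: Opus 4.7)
The plan is to prove \eqref{e:lnlnln} first, since it provides the smallness parameter needed for \eqref{e:tailest}. For \eqref{e:lnlnln}, I would note that $w = \ln y + 2\pi i k$ satisfies $|w| \ge \ln y$ for every $k$, and the principal branch gives $|\ln w| \le \ln|w| + \pi$. Therefore $|\ln w/w| \le (\ln|w| + \pi)/|w|$, which is decreasing in $|w|$ for $|w| \ge e^{1-\pi}$ and tends to $0$; choosing $y$ large enough that this is at most $1/2$ when $|w| = \ln y$ gives the bound uniformly in $k$, since increasing $|k|$ only enlarges $|w|$.

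For \eqref{e:tailest}, the first inequality is the termwise triangle inequality applied to \eqref{e:rk}. For the second, I would write $L = \ln w$, $a = |L|/|w|$, $b = 1/|w|$, and introduce the nonnegative series
\[
T_+ := \sum_{j \ge 0,\, m \ge 1} \frac{1}{m!} \left[{j+m \atop j+1}\right] a^m b^j,
\]
so that the quantity to estimate is $T_+ - a$. The key structural fact is the functional equation
\[
T_+ = -\ln(1 - a - b T_+),
\]
which can be derived either by paralleling the computation that $R_k = -\ln(1 - L/w + R_k/w)$ (this equation itself comes from $W_k e^{W_k} = y$ with $W_k = w - L + R_k$), or verified directly term-by-term from the expansion of $-\ln(1-\cdot)$.

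Rearranging the functional equation gives
\[
(1-b)(T_+ - a) = ab + \sum_{n \ge 2} \frac{(a + b T_+)^n}{n}.
\]
A monotone iteration of the functional equation, combined with the elementary inequality $-\ln(1 - a(1+2b)) \le 2a$ (valid for $a \le 1/2$ and $b$ sufficiently small), closes the bootstrap $T_+ \le 2a$. Plugging this in bounds the right-hand side of the rearranged identity by
\[
\frac{ab}{1-b} + \frac{a^2 (1+2b)^2}{2(1-b)(1 - a(1+2b))}.
\]
Using $b/a = 1/|L|$, the first summand is of order $a^2/|L|$ and the second is of order $a^2$; a direct calculation shows that their sum is at most $2a^2$ once $|L|$ is bounded below by a constant strictly greater than $1$, which holds for $y$ large because $|L| \ge \ln|w| \ge \ln\ln y \to \infty$, and once $|w|$ is sufficiently large to control the second summand.

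The main obstacle is the sharpness of the constant $2$: each of the two contributions on the right-hand side can be of order $a^2$, so the bound has little slack, and pushing the total strictly below $2a^2$ requires both $|L|$ to be bounded away from $1$ (to make $a^2/|L| < a^2$) and $|w|$ to be large (to keep the second contribution comfortably below $a^2$). Both conditions follow from taking $y$ large, and tracking through the elementary estimates carefully yields the stated bound.
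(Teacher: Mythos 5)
Your proof takes a genuinely different route from the paper's. For \eqref{e:lnlnln} your argument is fine (the paper does not even spell this part out). For \eqref{e:tailest} the paper splits the tail into $j=0$ and $j \ge 1$, uses the explicit value $c_{0,m}=1/m$ to handle $j=0$ directly, and handles $j\ge 1$ by factoring out $1/|\ln(y)+2\pi ik|$ and running a comparison against a fixed reference value $N$ of $y$, relying on the absolute convergence asserted in \cite{cghjk}. You instead exploit the algebraic origin of the coefficients: replacing $1/w$ by $-1/w$ in $R = -\ln(1-L/w+R/w)$ kills the alternating sign $(-1)^j$ and yields a functional equation $T_+ = -\ln(1-a-bT_+)$ for the absolute-value series, after which a bootstrap gives $T_+\le 2a$ and the rearranged identity delivers $T_+-a\le 2a^2$. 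This is more structural and, once fleshed out, more self-contained than the paper's somewhat ad hoc comparison with $N$.

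There is, however, a real gap in the bootstrap as written. You define $T_+$ as a numerical series and want to deduce $T_+\le 2a$ from a monotone iteration $S_{n+1} = -\ln(1-a-bS_n)$, $S_0=0$, plus the elementary inequality $-\ln(1-a(1+2b))\le 2a$. That iteration does produce an increasing sequence bounded by $2a$, hence converging to a fixed point $S_\infty \le 2a$ of the functional equation — but it is not automatic that $S_\infty$ equals the sum of the convergent double series $T_+$. The fixed-point equation can in principle have more than one solution; you need to argue that $T_+$ is the \emph{small} solution selected by the iteration. This can be done (e.g.\ by noting that all Taylor coefficients of $S_n(a,b)$ up to total degree $n$ agree with those of $T_+$ and are nonnegative, or by an implicit-function-theorem uniqueness argument combined with continuity of $T_+$ in $(a,b)$), but as stated the step is an assertion, not a proof. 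Likewise, the passage from the \emph{formal} power-series identity to the \emph{numerical} identity $T_+(a,b)=-\ln(1-a-bT_+(a,b))$ implicitly uses the absolute convergence of the double series, which both you and the paper are ultimately importing from \cite{cghjk}; it would be worth saying so explicitly, since that convergence is what licenses treating $T_+(a,b)$ as a number at all. The remaining constant-chasing (using $b/a = 1/|L|$ with $|L|$ bounded below, and $a\le 1/2$, $b$ small for $y$ large) is sound.
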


\begin{proof}
We have
\begin{equation}\label{e:twosums}
\begin{split}
\left| R_k - \frac{\ln(\ln(y)+2\pi ik)}{\ln(y)+2\pi ik}\right| &\le \sum_{j \ge 0, \ m \ge 1, \ (j,m) \ne (0,1)} \left| c_{j,m} \frac{\ln^{m}(\ln(y)+2 \pi ik)}{(\ln(y)+2 \pi ik)^{j+m}}\right|  \\
&= \sum_{m=2}^\infty \left| c_{0,m} \frac{\ln^{m}(\ln(y)+2 \pi ik)}{(\ln(y)+2 \pi ik)^{m}}\right| + \sum_{m=1}^\infty\sum_{j=1}^\infty \left| c_{j,m} \frac{\ln^{m}(\ln(y)+2 \pi ik)}{(\ln(y)+2 \pi ik)^{j+m}}\right|. 
\end{split}
\end{equation}
Now use $c_{0,m} = 1/m$ and \eqref{e:lnlnln} to write
\begin{equation}\label{e:j=0}
\sum_{m=2}^\infty \left| c_{0,m} \frac{\ln^{m}(\ln(y)+2 \pi ik)}{(\ln(y)+2 \pi ik)^{m}}\right| \le \frac 12 \sum_{m=2}^\infty \left|\frac{\ln(\ln(y)+2 \pi ik)}{\ln(y)+2 \pi ik}\right|^m \le \left|\frac{\ln(\ln(y)+2 \pi ik)}{\ln(y)+2 \pi ik}\right|^2.
\end{equation}
For the other term, write
\begin{equation}\label{e:jge1}
\begin{split}
\sum_{m=1}^\infty\sum_{j=1}^\infty& \left| c_{j,m} \frac{\ln^{m}(\ln(y)+2 \pi ik)}{(\ln(y)+2 \pi ik)^{j+m}}\right| = \\
&\frac 1 {|\ln(y)+2\pi i k|} \sum_{m=1}^\infty  \left|\frac{\ln(\ln(y)+2 \pi ik)}{\ln(y)+2 \pi ik}\right|^m  \sum_{j=1}^\infty  \frac {|c_{j,m}|} {|\ln(y)+2\pi i k|^{j-1}}.
\end{split}
\end{equation}
Since the double sum is absolutely convergent for large $y$ (see page 349 of \cite{cghjk}), for every large $y$ the terms
\[
 \left|\frac{\ln(\ln(y)+2 \pi ik)}{\ln(y)+2 \pi ik}\right|^m  \sum_{j=1}^\infty  \frac {|c_{j,m}|} {|\ln(y)+2\pi i k|^{j-1}}
\]
tend to zero as $m \to \infty$. Moreover, as $y$ increases, the terms get smaller. Hence, there is $N$ such that, setting $y=N$, for every $m \ge 1$ we have
\[
\left|\frac{\ln(\ln(N)+2 \pi ik)}{\ln(N)+2 \pi ik}\right|^m  \sum_{j=1}^\infty  \frac {|c_{j,m}|} {|\ln(N)+2\pi i k|^{j-1}} \le N.
\]
Plugging that into \eqref{e:jge1} gives
\[
\begin{split}
\sum_{m=1}^\infty\sum_{j=1}^\infty \left| c_{j,m} \frac{\ln^{m}(\ln(y)+2 \pi ik)}{(\ln(y)+2 \pi ik)^{j+m}}\right|  &\le \frac N {|\ln(y)+2\pi i k|} \sum_{m=1}^\infty  \left|\frac{\ln(\ln(y)+2 \pi ik)}{\ln(y)+2 \pi ik}\cdot \frac{\ln(N)+2 \pi ik}{\ln(\ln(N)+2 \pi ik)}\right|^{m} \\
& \le \frac {2N|\ln(N)+2 \pi ik|}{|\ln(\ln(N)+2 \pi ik)|}\cdot \frac{|\ln(\ln(y)+2 \pi ik)|}{|\ln(y)+2\pi i k|^2},
\end{split}
\]
for $y$ large enough. If $y$ is large enough, then 
\[
\frac {2N|\ln(N)+2 \pi ik|}{|\ln(\ln(N)+2 \pi ik)|} \le |\ln(\ln(y)+2 \pi ik)|,
\]
which implies
\[
\sum_{m=1}^\infty\sum_{j=1}^\infty \left| c_{j,m} \frac{\ln^{m}(\ln(y)+2 \pi ik)}{(\ln(y)+2 \pi ik)^{j+m}}\right|  \le \left|\frac{\ln(\ln(y)+2 \pi ik)}{\ln(y)+2 \pi ik}\right|^2.
\]
Plugging this and \eqref{e:j=0} into \eqref{e:twosums} gives \eqref{e:tailest}.
\end{proof}

\section{Proofs of Theorems}

We rewrite \eqref{e:reseq} as
\[
(-2izh^{-1} + h^{-\alpha})e^{-2izh^{-1} +  h^{-\alpha}}  = h^{-\alpha} e^{h^{-\alpha}} ,
\]
and then solve for $z$ using \eqref{e:wk} with
\[
x=-2izh^{-1} + h^{-\alpha},
\]
and 
\[
y=h^{-\alpha} e^{h^{-\alpha}}.
\]
Thus the solutions to \eqref{e:reseq} are given by
\begin{equation}\label{e:zk}
\begin{split}
z_k&= \frac {ih}2(W_k(y) - h^{-\alpha}) \\
 &= \frac {ih}{2}( \ln(y)+2\pi ik-\ln(\ln(y)+2\pi ik)- h^{-\alpha} + R_k) \\
 &= \frac {ih}{2}( 2\pi i k + \ln(h^{-\alpha})-\ln(\ln(y)+2\pi ik)  + R_k).\\
 &= \frac {ih}{2}\left( 2\pi i k -\ln\left(\frac{\ln(y)+2\pi ik}{h^{-\alpha}}\right)  + R_k\right),
\end{split}
\end{equation}
where $k$ varies over $\mathbb Z$. At this point, let us mention that Herbst and Mavi \cite{hm} and Sacchetti \cite{s} use a version of this equation to analyze resonances for the problem $-\partial_x^2 + \alpha \delta_a$ with $\alpha \to \infty$, but with $a>0$ and $k$ fixed. Thus, Herbst and Mavi's and Sacchetti's analysis would correspond to the asymptotic regime $z \to 0$ in our paper.

Below we keep the subscript $k$ to emphasize the role of this integer in our estimates, but note that $z_k$ is the same  as the $z$ as in the statements of the Theorems. In our next lemma we show that the first term in the right hand side of \eqref{e:zk} is the dominant term, and thus that $k$ is roughly of size $h^{-1}$.

\begin{lem}\label{l:keps}
For $k$ such that $z_k$ given by \eqref{e:zk} obeys \eqref{e:zeps}, we have 
\begin{equation}\label{e:keps}
\varepsilon / 2 \le \pi |k| h \le  2/\varepsilon,
\end{equation}
for $h$ small enough.
\end{lem}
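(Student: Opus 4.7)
The strategy is to split \eqref{e:zk} as $z_k = -\pi k h + E_k$, where
\[
E_k := \frac{ih}{2}\left(-\ln\!\left(\frac{\ln(y)+2\pi ik}{h^{-\alpha}}\right) + R_k\right),
\]
and then compare $|z_k|$ with $\pi|k|h$ via the triangle inequality. Writing $M := \pi|k|h$ and $L := \ln(y) + 2\pi ik = h^{-\alpha} - \alpha \ln h + 2\pi ik$, the task reduces to bounding $|E_k|$ uniformly, in terms of $M$ and $h$.

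First I would handle the two pieces of $E_k$. The previous lemma combined with \eqref{e:lnlnln} gives $|R_k| \le |\ln L/L| + 2|\ln L/L|^2 \le 1$, uniformly in $k$ for $h$ small. For the logarithmic term, I would use
\[
\frac{L}{h^{-\alpha}} = 1 - \alpha h^\alpha \ln h + 2\pi i k h^\alpha,
\]
together with the elementary inequality $|\ln w| \le \bigl|\ln|w|\bigr| + \pi$ and routine estimates on $\ln(1+x)$, to obtain a bound of the form
\[
|\ln(L/h^{-\alpha})| \le \ln M + (1-\alpha)_+|\ln h| + C
\]
whenever $M \ge 1$ and $h$ is small, with an absolute constant $C$; when $M \le 1$ the weaker bound $O(h^\alpha|\ln h|)$ suffices.

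Then I would argue the two sides of \eqref{e:keps} separately. For the upper bound $\pi|k|h \le 2/\varepsilon$, I would argue by contrapositive: if $M > 2/\varepsilon$, then the estimate above yields
\[
\frac{|E_k|}{M} \le \frac{h}{2M}\bigl(\ln M + (1-\alpha)_+|\ln h| + C + 1\bigr).
\]
Since $\ln M / M$ is decreasing on $[e,\infty)$, the first term is bounded by $h\varepsilon \ln(2/\varepsilon)/4$, and the remaining terms are $O(h|\ln h|\varepsilon)$. Hence $|E_k| \le M/2$ for $h$ small, so $|z_k| \ge M - |E_k| \ge M/2 > 1/\varepsilon$, contradicting $|z_k| \le 1/\varepsilon$. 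For the lower bound $\pi|k|h \ge \varepsilon/2$, the upper bound just proved provides $M \le 2/\varepsilon$, and substituting into the logarithm estimate gives $|E_k| = O(h|\ln h|) \le \varepsilon/2$ for $h$ small. Then $\pi|k|h \ge |z_k| - |E_k| \ge \varepsilon - \varepsilon/2 = \varepsilon/2$.

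The main obstacle is that $|E_k|$ is not bounded uniformly in $k$: it grows like $h\ln|k|$. The contrapositive argument for the upper bound is where this nonuniformity must be confronted, and it closes because $\ln M$ grows much more slowly than $M$, so $|E_k|/M$ remains small uniformly on $M \ge 2/\varepsilon$ as $h \to 0$.
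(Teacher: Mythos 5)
Your argument is correct and shares the paper's basic structure: isolate the dominant term $-\pi kh$ from \eqref{e:zk}, bound the remainder $E_k = \tfrac{ih}{2}(-\ln(L/h^{-\alpha}) + R_k)$ using $|R_k|\le 1$ (via the Lemma on $R_k$ and \eqref{e:lnlnln}) and an estimate on the logarithm, then compare $|z_k|$ with $\pi|k|h$ by the triangle inequality. The executions differ in how the comparison is closed. The paper proves the two inequalities in \eqref{e:keps} independently, each time by contradiction, and exponentiates the resulting bound on $\tfrac{h}{2}\ln|\ln y + 2\pi ik|$ to obtain an incompatible pair of growth rates (for the lower bound, $e^{\varepsilon/2h}$ vs.\ polynomial in $1/h$; for the upper bound, a monotonicity argument with $e^{x-a}-x$). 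You instead package a clean uniform bound $|E_k|\le \tfrac{h}{2}(\ln M + (1-\alpha)_+|\ln h| + C)$ for $M\ge 1$, first prove the upper bound $M\le 2/\varepsilon$ by noting $|E_k|/M\to 0$ uniformly on $M>2/\varepsilon$, and then feed that back in to control $|E_k|\le C h|\ln h|\le \varepsilon/2$ for the lower bound. This ordering avoids the exponentiation step and is arguably tidier; the tradeoff is having to be slightly careful about the region $M<1$, where your stated remainder bound $O(h^\alpha|\ln h|)$ is not quite right. For $\alpha<1$ one still has $|L/h^{-\alpha}|\le 1+\alpha h^\alpha|\ln h|+2h^{\alpha-1}$, so $|\ln(L/h^{-\alpha})|$ is $O(|\ln h|)$ rather than $O(h^\alpha|\ln h|)$ (and $O(h^{\alpha-1})$ for $\alpha>1$); this is a minor slip and still gives $|E_k|=O(h|\ln h|)\to 0$, so the lower-bound argument goes through unchanged.
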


\begin{proof}
We begin with the proof of the first inequality of \eqref{e:keps}. Assume for the sake of contradiction that $\pi |k| h <  \varepsilon / 2$. By \eqref{e:tailest} and \eqref{e:lnlnln} we have $|R_k| \le 1$. Combining the first inequality of \eqref{e:zeps} with \eqref{e:zk}, and plugging in $|R_k| \le 1$,  gives
\[
\varepsilon \le |z_k| \le \pi h |k| +  \frac h 2 \left|\ln\left(\frac{\ln(y)+2\pi ik}{h^{-\alpha}}\right) \right| + \frac h2.
\]
Then, using $\pi |k| h <  \varepsilon / 2$, we have
\[
\varepsilon \le h \left|\ln\left(\frac{\ln(y)+2\pi ik}{h^{-\alpha}}\right) \right| + h.
\]
If $h$ is small enough, this implies
\[
\frac \varepsilon 2 \le h \left|\ln|\ln(y)+2\pi ik| \right|,
\]
or
\[
e^{\varepsilon/(2h)} \le|\ln(y)+2\pi ik| \le h^{-\alpha} + \ln h^{-\alpha} + \varepsilon/h,
\]
which is a contradiction (for $h$ small enough).

For the proof of the second inequality of \eqref{e:keps}, assume for the sake of contradiction that $\pi |k| h >  2/\varepsilon $. Combining the first inequality of \eqref{e:zeps} with \eqref{e:zk}, and plugging in $|R_k| \le 1$,  gives
\[
\pi h |k| -  \frac h 2 \left|\ln\left(\frac{\ln(y)+2\pi ik}{h^{-\alpha}}\right) \right| - \frac h2 \le |z_k| \le 1/ \varepsilon.
\]
If $h$ is small enough, this implies
\[
\pi h |k| - \frac h 2\left|\ln|\ln(y)+2\pi ik| \right| \le \frac 3{2 \varepsilon},
\]
or
\[
 \exp\left( 2 \pi |k| - \frac 3{h \varepsilon}\right) - 2 \pi |k| \le \ln (y),
\]
Using now  $\pi |k| h >  2/\varepsilon $ gives
\begin{equation}\label{e:expln}
 \exp\left( \frac{4 } {h \varepsilon } - \frac 3{h \varepsilon}\right) - \frac 4 {h \varepsilon} \le \ln (y) = h^{-\alpha} + \ln(h^{-\alpha});
\end{equation}
indeed, since the function $f(x):=e^{x-a} - x$ is increasing on $(a,\infty)$, we have $f(2\pi|k|)>f(4/h\varepsilon)$ with $a=3/h\varepsilon$. Finally, \eqref{e:expln} is a contradiction for $h$ small enough.
\end{proof}

We now proceed to the proofs of the theorems. We simplify the logarithm on the right hand side of \eqref{e:zk} differently for $\alpha<1$ and $\alpha>1$, according to which term on the inside is larger.

\begin{proof}[Proof of Theorem \ref{t:asmall}]
Since $\alpha \in (0,1)$, the first inequality of \eqref{e:keps} implies that 
\begin{equation}\label{e:khlny}
|2\pi i k|  \ge \varepsilon h^{-1} >\ln(h^{-\alpha}) + h^{-\alpha}  =  \ln(y),
\end{equation}
for $h$ small enough, and we write
\[
\ln\left(\frac{\ln(y)+2\pi ik}{h^{-\alpha}}\right) = \ln( 2 \pi i k h^\alpha) + \ln\left(1 + \frac{\ln(y)}{2\pi i k}\right),
\]
which, inserted into \eqref{e:zk}, gives
\[
z_k = \frac{ih}{2}(2\pi ik-\ln( 2 \pi i k h^\alpha) + R'_k),
\qquad \text{where} \qquad 
R_k' = R_k - \ln\left(1 + \frac{\ln(y)}{2\pi i k}\right).
\]
Using $|R_k| \le 1$ as in Lemma 2, and \eqref{e:khlny}, we get
\begin{equation}\label{e:rk'}
|R_k'| \le |R_k| + \ln\left|1 + \frac{\ln(y)}{2\pi i k}\right| + \Big|\arg\left(1 + \frac{\ln(y)}{2\pi i k}\right)\Big|\le 1 + \ln 2  + \pi/2 \le 4.
\end{equation}
Hence
\begin{equation}\label{e:rezkimzk}\begin{split}
\re z_k &= - \pi h k + \frac h 2 \arg(2\pi i k h^\alpha) - \frac h2 \im R_k',\\
\im z_k & = - \frac h 2 \ln|2\pi k h ^\alpha| + \frac h 2 \re R_k'.
\end{split}\end{equation}
From these equations, we can already see the leading order behavior of the resonances. 

To get a precise statement with good remainder estimates, we manipulate the equations \eqref{e:rezkimzk} in the following way. Taking real and imaginary parts of \eqref{e:reseq} gives
\[\begin{split}
h^{-\alpha} e^{-2\im z_k/h} \cos(2\re z_k/h) &= h^{-\alpha} + 2\im z_k h^{-1}  \\
h^{-\alpha} e^{-2\im z_k/h} \sin(2\re z_k/h) &= - 2\re z_k h^{-1}.
\end{split}\]
Squaring the equations and adding them using $\cos^2 + \sin^2 = 1$ gives
\[
\frac{e^{4\im z_k/h}(4(\re z_k)^2 + 4(\im z_k)^2)}{h ^ {2-2\alpha}} + \frac{e^{4\im z_k/h} (4 \im z_k)}{h^{1-\alpha}} + e^{4\im z_k/h} = 1
\]
or
\[\begin{split}
-\frac {4\im z_k}h &= \ln\left(1 + 4\im z_k h^{\alpha -1} + (4(\re z_k) ^2 + 4(\im z_k)^2) h^{2\alpha -2}\right)\\
&= \ln(4 (\re z_k) ^2 h^{2\alpha -2}) + \ln(1+t),
\end{split}\]
where
\[
t=\frac{4(\im z_k)^2 + 4 \im z_k h^{1-\alpha} +   h^{2-2\alpha}}{ 4(\re z_k) ^{2} }.
\]
Combine with $0 \le \ln(1+t) \le t$ 
to get
\begin{equation}\label{e:ln0t}
0 \le -\frac {4\im z_k}h - \ln(4(\re z_k)^2 h^{2\alpha-2}) \le t.
\end{equation}
From \eqref{e:rezkimzk} we have, using \eqref{e:keps}, \eqref{e:rk'}, and $|\arg(2\pi i k h^\alpha)| = \pi/2$, that
\[
|\re z_k| \ge \frac \varepsilon 2 - \frac {h\pi}4 - 2 h \ge \varepsilon /3, \qquad \text{and} \qquad 
|\im z_k| \le  h\ln(4\varepsilon^{-1}h^{\alpha-1}).
\]
That gives
\[
t \le \frac{4h^2\ln^2(4\varepsilon^{-1}h^{\alpha-1}) + 4 h\ln(4\varepsilon^{-1}h^{\alpha-1}) h^{1-\alpha} +   h^{2-2\alpha}}{ 4(\re z_k) ^{2} }\le \frac{h^{2-2\alpha}}{2 (\re z_k)^2} \le 5 h^{2-2\alpha}\varepsilon^{-2}.
\]
Plugging into \eqref{e:ln0t} gives \eqref{e:tasmall}.
\end{proof}

\begin{proof}[Proof of Theorem \ref{t:abig}]
Since $\alpha>1$,  the second inequality of \eqref{e:keps} implies that
\[
|2\pi i k| \le 4 \varepsilon^{-1} h^{-1} < \ln(h^{-\alpha}) + h^{-\alpha}  = \ln(y),
\]
for $h$ small enough, and we write
\[\begin{split}
\ln\left(\frac{\ln(y)+2\pi ik}{h^{-\alpha}}\right) &= \ln \left( 1 + h^\alpha \ln(h^{-\alpha}) + 2 \pi i k h^\alpha\right),
\end{split}\]
which, inserted into \eqref{e:zk}, gives
\begin{equation}\label{e:rzkizk}\begin{split}
\re z_k &=  \frac {h}{2}\left( -2\pi  k + \im \ln \left( 1 + h^\alpha \ln(h^{-\alpha}) + 2 \pi i k h^\alpha\right)  - \im R_k\right),\\
\im z_k &= \frac {h}{2}\left( - \ln \left| 1 + h^\alpha \ln(h^{-\alpha}) + 2 \pi i k h^\alpha\right|  + \re R_k\right). 
\end{split}\end{equation}
This time the manipulations between \eqref{e:rezkimzk} and \eqref{e:ln0t} are not necessary, but we must expand further to get a nonvanishing imaginary part. We will use the approximation $\ln(1+t) \sim t$. To bound the remainder, we write
\[
\ln \left| 1 + h^\alpha \ln(h^{-\alpha}) + 2 \pi i k h^\alpha\right| =  \tfrac 12 \ln(1 + 2h^\alpha \ln(h^{-\alpha}) + h^{2\alpha}\ln^2(h^{-\alpha}) + 4 \pi^2 k^2 h^{2\alpha}),
\]
and use the estimate $0 \le  - \ln(1+t) + t  \le t^2/2$, with 
\begin{equation}\label{e:tdef}
t = 2h^\alpha \ln(h^{-\alpha}) + h^{2\alpha}\ln^2(h^{-\alpha}) + 4 \pi^2 k^2 h^{2\alpha},
\end{equation}
to write, using $\pi^2k^2 \le 4 \varepsilon^{-2}h^{-2}$,
\[\begin{split}
- h^{2\alpha}\ln^2(h^{-\alpha}) &\le -\ln(1 +  2h^\alpha \ln(h^{-\alpha}) + h^{2\alpha}\ln^2(h^{-\alpha}) + 4 \pi^2 k^2 h^{2\alpha}) +  2h^\alpha \ln(h^{-\alpha}) +  4 \pi^2 k^2 h^{2\alpha} \\
&\le \frac {t^2} 2 - h^{2\alpha} \ln^2(h^{-\alpha})\\
&= \frac 12(2h^\alpha \ln(h^{-\alpha}) + h^{2\alpha}\ln^2(h^{-\alpha}) + 4 \pi^2 k^2 h^{2\alpha})^2 - h^{2\alpha} \ln^2(h^{-\alpha})\\
&\le 130 \varepsilon^{-4}h^{4\alpha - 4}+ 2 h^{2\alpha}\ln^2(h^{-\alpha}),
\end{split}\]
which implies
\begin{equation}\label{e:lnt}
\Big| -\ln \left| 1 + h^\alpha \ln(h^{-\alpha}) + 2 \pi i k h^\alpha\right|  +  h^\alpha \ln(h^{-\alpha}) +  2 \pi^2 k^2 h^{2\alpha} \Big| \le 65 \varepsilon^{-4}h^{4\alpha - 4}+  h^{2\alpha}\ln^2(h^{-\alpha}).
\end{equation}
where the first term on the right is dominant when $\alpha < 2$ and the second term is dominant when $\alpha \ge2$.

Inserted into \eqref{e:rzkizk}, that gives
\begin{equation}\label{e:imk1}\begin{split}
|\im z_k +   \pi^2 k^2 h^{2\alpha+1}| &= \frac h2 \left| - \ln \left| 1 + h^\alpha \ln(h^{-\alpha}) + 2 \pi i k h^\alpha\right| + \re R_k  + 2 \pi^2 k^2 h^{2\alpha} \right| \\ & \le \frac h 2 \left(|\re R_k - h^\alpha \ln(h^{-\alpha})| + 65 \varepsilon^{-4}h^{4\alpha - 4}+  h^{2\alpha}\ln^2(h^{-\alpha})\right),
\end{split}\end{equation}
where for the inequality we added and subtracted $h^\alpha \ln(h^{-\alpha})$, and used \eqref{e:lnt}.

To deal with $R_k$ term, using $\ln(y) = h^{-\alpha} + \ln(h^{-\alpha})$ and $|2\pi i k| \le 4 \varepsilon^{-1} h^{-1}$, we have
\begin{equation}\label{e:lbd}\begin{split}
\left|\frac{\ln(\ln(y) + 2 \pi i k)}{\ln(y)+2\pi i k}\right|&= \sqrt{\frac{\ln^2|\ln(h^{-\alpha})+h^{-\alpha}+2\pi ik|+ (\arg(\ln(h^{-\alpha})+h^{-\alpha}+2\pi ik))^2}{(\ln(h^{-\alpha})+h^{-\alpha})^2+(2\pi k)^2}}\\
&\le \frac {\ln(h^{-\alpha})}{h^{-\alpha}} \frac {|\ln|\ln(y) + 2 \pi i k| + i \pi/2|/\ln(h^{-\alpha})} {|1 + h^\alpha \ln(h^{-\alpha}) + h^\alpha 2 \pi i k|} \le 2 h^\alpha \ln (h^{-\alpha}),
\end{split}\end{equation}
for $h$ sufficiently small, 
so, by \eqref{e:tailest},
\begin{equation}\label{e:rk1}
\left|R_k - \frac{\ln(\ln(y) + 2 \pi i k)}{\ln(y)+2\pi i k}\right| \le 8 h^{2\alpha} \ln^2 (h^{-\alpha}).
\end{equation}
Next,
\[
\re \frac{\ln(\ln(y) + 2 \pi i k)}{\ln(y)+2\pi i k} - h^\alpha \ln h^{-\alpha} =  (h^\alpha \ln h^{-\alpha}) \left(\frac {s-t} {1+t} \right),
\]
where $t$ is as in \eqref{e:tdef} and
\[\begin{split}
s &= h^\alpha \ln h^{-\alpha} + \frac 12 \frac{\ln(1+t)}{\ln(h^{-\alpha})} + \frac 12 h^\alpha \ln(1+t) +   \frac{2\pi k h^\alpha\arg(\ln y + 2 \pi i k)}{\ln|\ln y+2\pi i k|} \left(1+\frac{\ln(1+t)}{2\ln(h^{-\alpha})}\right) \\
& \le \frac t 2 + \frac t {2 \ln (h^{-\alpha})} + \frac 1 2 h^\alpha t + \frac {2t}{\ln(h^{-\alpha})} \le t,
\end{split}
\]
for $h$ small enough.
That implies
\[
-t  h^\alpha \ln h^{-\alpha}\le \re \frac{\ln(\ln(y) + 2 \pi i k)}{\ln(y)+2\pi i k} - h^\alpha \ln h^{-\alpha}  \le  0,
\]
which implies
\begin{equation}\label{e:rk2}
\left|\re \frac{\ln(\ln(y) + 2 \pi i k)}{\ln(y)+2\pi i k} - h^\alpha \ln h^{-\alpha} \right| \le 2 h^{2\alpha}\ln^2(h^{-\alpha}) + h^{3\alpha}\ln^3(h^{-\alpha}) + 4 \pi^2 k^2 h^{3\alpha}\ln h^{-\alpha}.
\end{equation}
Substituting \eqref{e:rk1} and \eqref{e:rk2} into \eqref{e:imk1} gives
\[\begin{split}
|\im z_k +   \pi^2 k^2 h^{2\alpha+1}| & \le \frac h 2(11 h^{2\alpha} \ln^2(h^{-\alpha}) + h^{3\alpha}\ln^3 (h^{-\alpha}) + 16 \varepsilon^{-2} h^{3\alpha-2} \ln (h^{-\alpha}) + 65 \varepsilon^{-4} h^{4\alpha-4})\\
&\le 6 h^{2\alpha+1} \ln^2(h^{-\alpha})+ 33 \varepsilon^{-4}h^{4\alpha - 3}.
\end{split}\]
Returning to the real part, from \eqref{e:rzkizk}, and using again \eqref{e:lbd} and \eqref{e:rk1}, we have
\[\begin{split}
|\re z_k + \pi k h| &=  \frac h 2|  \arg \left( 1 + h^\alpha \ln(h^{-\alpha}) +  2\pi i k h^{\alpha}\right)  -  \im R_k|\\
&\le \pi k h^{\alpha+1} + 4 h^{2\alpha+1}\ln^2h^{-\alpha} + h^{\alpha+1} \ln h^{-\alpha}  \le 3 \varepsilon^{-1} h^\alpha,
\end{split}\]
which implies, after factoring and using also \eqref{e:zeps},
\[
|\pi^2k^2h^{2\alpha+1} - (\re z_k)^2 h^{2\alpha -1}| \le 3 \varepsilon^{-1} h^{2\alpha} |\pi k h^{\alpha} - \re z_k h^{\alpha-1}|  \le 9 \varepsilon^{-2} h^{3\alpha-1}, 
\]
and hence
\[
|\im z_k + (\re z_k)^2 h^{2\alpha -1}|\le  7 h^{2\alpha+1} \ln^2(h^{-\alpha}) +  34 \varepsilon^{-4}h^{4\alpha - 3}.
\]
\end{proof}

\noindent\textbf{Concluding remarks.} At the beginning of the proof of Theorem \ref{t:asmall}, we used \eqref{e:khlny} to say that $|2\pi i k|$ always dominates $\ln(y)$ when $\alpha \in (0,1)$ whenever $\varepsilon < |z| < 1/\varepsilon$. At the beginning of the proof of Theorem \ref{t:abig}, we similarly saw that $\ln(y)$ always dominates $|2\pi i k|$ when $\alpha >1$ for this range of $z$. But when $\alpha = 1$, neither of these terms dominates the other for all $z$ in this range. It would be necessary to distinguish large and small $z$ using a finer analysis in this case.

\end{document}